\begin{document}

\title{\bf A New Modular Division Algorithm \\
and Applications}
\author{M. S. Sedjelmaci \hspace{.7cm} C. Lavault \\
{\small \sl LIPN CNRS UPRES-A 7030, Universit\'e Paris-Nord, 93430 Villetaneuse, France} \\
{\small E-mail: \texttt{$\{$lavault,sms$\}$@lipn.univ-paris13.fr}}
}
\date{\empty} 
\maketitle

\vspace{-.8cm}
\begin{abstract}
The present paper proposes a new parallel algorithm for the modular 
division $u/v\bmod \beta^s$, where $u,\; v,\; \beta$ and $s$ are 
positive integers $(\beta\ge 2)$. The algorithm combines the classical 
add-and-shift multiplication scheme with a new propagation carry 
technique. This ``Pen and Paper Inverse'' ({\em PPI}) algorithm, 
is better suited for systolic parallelization in a ``least-significant 
digit first'' pipelined manner. Although it is equivalent to Jebelean's 
modular division algorithm~\cite{jeb2} in terms of performance 
(time complexity, work, efficiency), the linear parallelization 
of the {\em PPI} algorithm improves on the latter when the input size 
is large. The parallelized versions of the {\em PPI} algorithm leads 
to various applications, such as the exact division and the digit 
modulus operation (dmod) of two long integers. It is also applied 
to the determination of the periods of rational numbers as well 
as their $p$-adic expansion in any radix $\beta\ge 2$.

\smallskip \noindent {\sl Keywords:}\ Modular division; Exact division, Digit Modulus 
(dmod); Integer greatest commun divisor (GCD); $p$-adic expansion.
\end{abstract}

\newtheorem{thm}{Theorem}[section] 
\newtheorem{defi}{Definition}[section] 
\newtheorem{lem}{Lemma}[section]
\newtheorem{cor}{Corollary}[section] 
\newtheorem{prop}{Proposition}[section]
\newenvironment{proof}{%
				    \begin{trivlist}{\setlength{\itemsep}{0cm}
					  \item[]{\bf Proof}}%
				   }{\hfill $\square$
				    \end{trivlist}}
\newenvironment{rem}{%
                      \begin{trivlist}{\setlength{\itemsep}{0cm}
                         \item[]{\bf Remark:}}%
                    }{%
                      \end{trivlist}}
\newcommand{\divi}{\mbox{div}}

\bibliographystyle{article}
\def\bibfmta#1#2#3#4{{#1} {#2}, {\em #3}, #4}
\bibliographystyle{book}
\def\bibfmtb#1#2#3#4{{#1} {\em #2}, {#3}, #4}

\section{Introduction}
The modular division of two positive integers $u$ and $v$ modulo a radix 
$\beta\ge 2$ to the power $s$ is defined as $(u/v)\bmod \beta^s$. 
It is used in many topics in theoretical computer science, such as $p$-adic 
computation, cryptography, Computer Algebra systems, etc. When $s$ is small, 
say less than a machine word (i.e., 8, 16 or 32 bits), the modular division 
is completed by several algorithms in $O(s^2)$ time complexity. Such is the 
case for the {\em Extended Euclidean algorithm} ({\em EEA}) in~\cite{knu}. 
However, the {\em EEA} is not efficient for large $s$, because it entails 
long quotient-remainder divisions of two long positive integers at each step.

In~\cite{jeb2}, Jebelean proposes a more efficient algorithm: {\em Modiv}. 
It is better suited for systolic parallelization in a {\em LSF} 
({\em ``least-significant digit first''}) pipelined manner. Actually, 
experiments performed in~\cite{jeb2} using SACLIB Computer Algebra system 
indicate more than 20 times speed-up in computing the inverse modulo a 
power of $\beta=2^{29}$. The speed-up was measured in comparison with the 
{\em EEA} algorithm improved by Lehmer~\cite{leh} and Collins~\cite{col}.

Jebelean also presents {\em Ediv}, a new algorithm for exact division of two 
long integers which improves on the classical quotient-remainder algorithm 
described in~\cite{knu}, when it is known in advance that the remainder 
is zero. Although it is presented as an application to {\em Ediv}, 
{\em Modiv} may rather be considered as the true main algorithm. As a matter 
of fact, {\em Ediv} is obtained easily by running {\em Modiv} with the 
specific parameter $s=\ell_\beta(u)-\ell_\beta(v)+1$ (see 
Subsection~\ref{not} for the notation $\ell_\beta(u)$). Hence, we rather focus on modular division 
algorithms throughout the paper.

\medskip In Section 2, we compare and discuss sequential and parallel versions of 
{\em Modiv} and of the {\em ``Pen and Paper Inverse''} ({\em PPI}) algorithm 
for modular division. Section 3 is devoted to the parallelization of the 
{\em PPI} algorithm, where a linear parallel {\em PPI} ({\em ParPPI}) 
algorithm using a new carry propagation technique is presented. Section 4, 
gives applications of the {\em ParPPI} algorithm to the exact division 
and the {\em digit modulus}' (dmod) operations on two long integers, as well 
as to the determination of the periods of rational numbers and their 
$p$-adic expansion in any given radix $\beta\ge 2$. Section 5 concludes 
with some remarks.

\section{Modular Division Algorithms}
Let $x\equiv (u/v)\pmod {\beta^{s}}$. {\em Modiv} follows the process below. 
After shifting the operands $u$ and $v$ until they become relatively prime 
to $\beta$, $x_{0}$ (the least-significant digit of $x$) is found from 
$u_{0}$ and $v_{0}$ (the least-significant digits of $u$ and $v$) in the 
form $x_0\equiv u_0v_0^{-1}\pmod \beta$.

Next, once $x_0$ is obtained, the next coefficient $x_1$ is performed by 
applying again the same process to $(u-x_0v)/\beta$ and $v$.

\subsection{Notation and Example} \label{not}
Throughout the paper we assume that any integer is expressed in radix $\beta\ge 2$, 
where $\beta$ and $v$ are coprime. As in~\cite{web}, $\ell_\beta(u)$ denotes the number 
of digits needed to represent $u$ in radix $\beta$ and $\gcd(a,b)$ denotes 
the greatest commun divisor of integers $a$ and $b$.

\medskip The algorithm designed in the paper is based on a simple technique: 
the classical ``Pen and Paper'' multiplication, which also works
in {\em LSF} ({\em ``least-significant digit first''}) manner. 
Roughly speaking, the algorithmic scheme consists in guessing and 
filling the missed digits of $x$ from right to left, so that the puzzle 
adapts the classical multiplication, and the process stops when the $s$-th 
digit of $x$ is found. This is the reason why the algorithm is called the 
{\em ``Pen and Paper Inverse''} algorithm, or {\em PPI} for short.

In Table~\ref{tab:ppi}, a simple example describes this inverse multiplication scheme 
for $u=37,229$, $v=1,543$ and $\beta=2$ and $s=7$.

\begin{table}[t]
\begin{center}
\footnotesize
\begin{tabular}{|l|l|}
\hline
 & \\
$\quad \;\cdots \;\cdots\; \cdots$ 1 $\;\gets x$ unknown & $\quad \!$ 1 1 0 1 0 1 1 $\;\gets x=107$ \\
$\times$ 0 0 0 0 1 1 1 $\;\gets v\bmod 2^7$ & $\times$ 0 0 0 0 1 1 1 $\;\gets v\bmod 2^7$ \\
$\cdots \;\cdots\; \cdots\; \cdots \!$ 1 & $\quad \!$ 1 1 0 1 0 1 1 \\
$\cdots \quad \cdots \quad \cdots$ & $\quad \!$ 1 0 1 0 1 1 \\
$\cdots \quad \cdots \quad \cdots$ & $\quad \!$ 0 1 0 1 1 \\
= 1 1 0 1 1 0 1 $\;\gets u\bmod 2^7$ & = 1 1 0 1 1 0 1 $\;\gets u\bmod 2^7$ \\
& \\
\hline
 & \\
Before the {\em PPI} algorithm & After the {\em PPI} algorithm \\
 & \\
\hline
\end{tabular}
\caption{The {\em PPI} computation of $(37,229/1,543)\bmod 2^7$}
\label{tab:ppi}
\end{center}
\end{table}

\section{The Algorithms {\em Modiv} and {\em PPI}}
Let us recall the sequential and the parallel versions of the algorithm 
{\em Modiv} proposed in~\cite{jeb2}.

\subsection{The sequential version {\em SeqModiv}}

\bigskip \noindent \hrulefill\, {\bf \em Sequential Algorithm SeqModiv}\, \hrulefill

\smallskip \noindent {\em Input:}\ $u,\; v>0$, with $\gcd(v,\beta)=1$, and $s$ a positive 
integer. \\
{\em Output:}\ $(u/v)\bmod \beta^s$.

\smallskip $v_0$ := $v\bmod \beta$ ; $a$ := $v_0^{-1}\bmod$\ \hspace{3cm}  /*~initialization~*/

\smallskip {\bf for} $k$ := $0$ {\bf to} $s-1$ {\bf do} 

\hspace*{1cm} $x_k$ := $au\bmod \beta$ ;

\hspace*{1cm} $u$ := $(u-x_kv\bmod \beta^{s-k})/\beta$

{\bf endfor}

\smallskip \noindent {\bf return}\ $x=(x_{s-1},\ldots,x_0)$

\smallskip \noindent \hrulefill{}\hrulefill

\medskip The time complexity of {\em SeqModiv} is $O(s^2)$. In order to compare the 
parallel version of {\em Modiv} with the parallelization of the {\em PPI} 
algorithm, we first recall {\em ParModiv}, the parallel version of 
{\em Modiv}. In {\em ParModiv}, a carry save technique is used to handle 
the carry propagation (see~\cite{jeb3,mul}).

\medskip For every integer $x$ such that $|x|<\beta^2$, the notation $(a,b)$ := $x$ 
is defined as
$$(a,b)\stackrel{def}{=} (x \; \divi\ \beta,x\bmod \beta)\ \mbox{with}\ 0\le 
b<\beta,$$
where $\divi$ is the ``integral division'': 
$x \; \divi\ \beta \stackrel{def}{=} \lfloor x/\beta\rfloor$.

\subsection{The Parallel version {\em ParModiv}}

\bigskip \noindent \hrulefill\, {\bf \em Parallel Algorithm ParModiv}\, \hrulefill

\medskip \noindent {\em Input:}\ $u,\; v>0$, with $\gcd(v,\beta)=1$, and $s$ a positive 
integer. \\
{\em Output:}\ $(u/v)\bmod \beta^s$.

\smallskip {\bf for} $i$ := $1$ {\bf to} $s$ {\bf pardo} $y_i$ := $0$\ {\bf endfor}\ \hspace{2cm} /*~initialization~*/

\hspace*{.5cm} $a$ := $v_0^{-1}\bmod \beta$

\smallskip {\bf for} $k$ := $0$ {\bf to} $s-2$ {\bf do}

\hspace*{.5cm} $x_k$ := $au_k\bmod \beta$ ;

\hspace*{.5cm} $y_{k+1}$ := $y_{k+1}-(x_kv_0) \; \divi \beta$\ \hspace{3cm} /*~update of the first carry~*/

\smallskip \hspace*{1cm} {\bf for} $i$ := $1$ {\bf to} $s-k-1$ {\bf pardo} 

\hspace*{1.5cm} $(y_{k+i+1},u_{k+i})$ := $u_{k+i}-x_kv_i+y_{k+i}$

\hspace*{1cm} {\bf endfor}

\smallskip {\bf endfor}

\smallskip \hspace*{1cm} $x_{s-1}$ := $au_{s-1}\bmod \beta$

\smallskip \noindent {\bf return}\ $x=(x_{s-1},\ldots,x_1,x_0)$

\smallskip \noindent \hrulefill{}\hrulefill

\subsection{The PPI Algorithm}
Assume $u$ and $\beta$ are coprime. Otherwise, $u=\beta^q u$' for some 
positive integers $q$ and $u$', such that $u$' and $\beta$ are coprime. 
Then, $x_i=0$ for $i=0,1,\ldots,(q-1)$. Since $(u/v)\bmod \beta^s =\beta^q \big((u'/v)\bmod \beta^{s-q}\big)$, 
the algorithm variables reduces to the parameters $u$' and $s'=s-q$.

In the algorithm the $s$ least-significant digits of $u$ only are needed; 
and thus we may use $u\bmod \beta^s$ instead of $u$. Set

$$u\bmod \beta^s =\sum_{i=0}^{s-1} u_i \beta^i\ \quad \mbox{and}\ \quad v\bmod \beta^s = \sum_{j=0}^{s-1} v_j \beta^j,$$
and if $\ell_\beta(u)<s$, let $u_i=0$ for all $i\ge \ell_\beta(u)$.

On the above assumptions, the algorithm described in Table~\ref{tab:ppi} expresses as 

\bigskip \noindent \hrulefill\, {\bf \em Sequential PPI Algorithm}\, \hrulefill

\medskip \noindent {\em Input:}\ $u,\; v>0$, with $\gcd(u,\beta)=\gcd(v,\beta)=1$, and $s$ a positive integer. \\
{\em Output:}\ $(u/v) \bmod \beta^s$.

\medskip $a$ := $v_0^{-1}\bmod \beta$ ; $x_0$ := $(au_0)\bmod \beta$ ;\ \hspace{2cm} /*~initialization~*/

\smallskip $c_1$ := $(x_0v_0) \; \divi\ \beta$

\newpage
{\bf for} $k$ := $1$ {\bf to} $s-1$ {\bf do}\ \hspace{5cm} /*~loop~*/

\hspace*{1cm} $L_k$ := $\sum_{j=1}^{k} v_j x_{k-j}+c_k$ ;

\hspace*{1cm} $x_k$ := $a(u_k-L_k)\bmod \beta$ ; 
$c_{k+1}$ := $(L_k+x_kv_0) \; \divi\ \beta$ 

{\bf endfor}

\smallskip \noindent {\bf return}\ $x=(x_{s-1},\ldots,x_0)$

\smallskip \noindent \hrulefill{}\hrulefill

\bigskip \noindent {\bf Remarks:}
\begin{itemize}
\item In the case when $\beta=2$, the algorithm turns out to be simpler.

\item Notice that the loop is very similar to a triangular linear system of the form
$$A\,X\; =\; B,$$
with solution $X=\; ^t\!(x_{s-1},\ldots,x_1)$, where $B=\;^t\!(b_{s-1},\ldots,b_1)$ 
is a given vector and $A=(a_{i,j})$ $(1\le i,\; j\le s-1)$, is defined as
$$a_{i,j}\; \stackrel{def}{=}\;
\left\{ \begin{array}{lll}
	v_1 & \mbox{if}\ i=j,\\
	v_{i-j+1} & \mbox{if}\ i>j,\\
	0 & \mbox{if}\ i<j.
\end{array} \right.$$
The parallelization process described in Section~\ref{lpv} applies to the above triangular system.

\item In place of $s$, the constant $m=\ell_\beta(v\bmod \beta^s)$ can be used for updating $L_k$, 
because it is the current number of digits of $v$ needed in the computation (see Fig.~1). 
The constant $m$ allows faster computations: $m\le s$ and, when $m<s$, all the useless 
computations corresponding to $v_m,\ldots,v_{s-1}$ are eliminated. Hence, $L_k$ is updated as follows:
$$L_k\ \mbox{:=}\ \sum_{1\le j\le \mu} v_j x_{k-j}+c_k,\ \quad \mbox{where}\ \; \mu=\min(k-1,m-1).$$

\item For any given $k$, the carry $c_{k+1}$ satifies the relation $L_k + x_kv_0 = u_k + \beta c_{k+1}$.

Moreover, the worst-case time complexity of the above algorithm occurs 
for all pairs $(u,v)$ such that $(u+v)\equiv 0\pmod {\beta^{s}}$, with 
output $x=(u/v)\bmod \beta^s=\beta^s - 1$.
In that case, the largest value of $L_k$ is $\beta k - 1$. Therefore, 
for all $k$,
$$L_k\le \beta k-1\le \beta s - \beta - 1.$$
\end{itemize}

\section{Linear Parallelized {\em PPI} Algorithms} \label{lpv}
In the {\em PPI} algorithm, the output $x$ is obtained step by step, 
least-significant digit first. The digits $a$, $u_k$ and the 
least-significant digit of $L_k$, namely $L_k\bmod \beta$, are only 
needed to compute $x_k$. However, $L_k\bmod \beta$ is obtained 
after computing all of the two-digits products $v_j\times x_{k-j}$ 
and their sum.

This variant significantly increases the running time of the algorithm 
and prevents any efficient systolic implementation. In order to overcome 
the difficulty, we make use of the followings two facts:
\begin{enumerate}
\item Every two-bits product $v_j\times x_{k-j}$ can be computed and added to $L_k$ as soon as $x_{k-j}$ is found.

\item The goal of the parallelization is to break the computation of the sums 
$\displaystyle{\sum_{1\le j\le k-1} v_j x_{k-j}}$. As far as the carry propagation is concerned for updating 
the $L_i$'s (for $i=k+1,\ldots,s-1$), the carry-save technique used in {\em ParModiv} can be applied successfully 
to the parallelization of the {\em PPI} algorithm {\em ParPPI}.
\end{enumerate}

\subsection{A Linear Parallel PPI Algorithm ParPPI}\label{lpppi}

\bigskip \noindent \hrulefill\, {\bf \em ParPPI Algorithm (version~1)}\, \hrulefill

\medskip \noindent {\em Input:}\ $u,\; v>0$, such that $\gcd(u,\beta)=\gcd(v,\beta)=1$,
and $s$ a positive integer. \\
{\em Output:}\ $(u/v)\bmod \beta^s$. 

\medskip {\bf for} $i$ := $0$ {\bf to} $s$ {\bf pardo} $(y_i,L_i)$ := 0 
{\bf endfor} ;\ \hspace{1cm} /*~initialization~*/

$a$ := $v_0^{-1}\bmod \beta$

\smallskip \hspace*{.5cm} {\bf for} $k$ := $0$ {\bf to} $s-2$ {\bf do}\ 
\hspace{3cm} /*~main loop~*/

\hspace*{1cm} $x_k$ := $a(u_k-L_k-y_k)\bmod \beta$

\smallskip \hspace*{1.5cm} {\bf for} $i$ := $0$ {\bf to} $s-k-1$ {\bf pardo}

\hspace*{2cm} $(y_{k+i+1},L_{k+i})$ := $L_{k+i} + x_kv_i + y_{k+i}$

\hspace*{1.5cm} {\bf endfor}

\smallskip \hspace*{.5cm} {\bf endfor}

\smallskip \hspace*{1cm} $x_{s-1}$ := $a(u_{s-1}-L_{s-1}-y_{s-1})\bmod \beta$

\smallskip \noindent {\bf return}\ $x=(x_{s-1},\ldots,x_1,x_0)$

\smallskip \noindent \hrulefill{}\hrulefill

\subsubsection{Comparison with {\em ParModiv}}

{\em ParPPI} (version~1) and {\em ParModiv} are linear in terms of {\em ``surface''} 
(i.e., the maximal number of processors needed in the algorithms) and time complexity. 
The algorithms are equivalent, but they also slightly differ in the following points:
\begin{itemize}
\item Every variable in the {\em ParPPI} algorithm consists 
of a single non-negative digit, whereas {\em ParModiv} uses a signed 
double digit for the variables $y_j$.

\item In contrast with {\em ParModiv}, the {\em ParPPI} algorithm 
does not change the input $u$.

\item In the {\em ParPPI} algorithm, all carries $y_{k+i}$ are updated in 
parallel. By contrast the first carry is updated serially in {\em ParModiv}.
\end{itemize}
Although they are not important regarding the design of the algorithms 
themselves, the above differences cause substantial time improvements 
when $s$ is large, and these algorithms are intensively used to devise 
efficient GCD algorithms, for example. (See~\cite{jeb3,sor,web}.)
 
\subsection{A New Carry Propagation Technique}
We now describe a new carry propagation technique, which propagates 
the carries alternately. This new technique, called {\em ``alternated 
carry''}, is illustrated in the second version of {\em ParPPI}, where 
the main loop of {\em ParPPI} (Version~1) designed in Subsection~\ref{lpppi} 
can be rewritten as follows:

\bigskip \noindent \hrulefill\, {\bf \em Main Loop of the ParPPI Algorithm (version~2)}\, \hrulefill

\medskip {\bf for} $k$ := $0$ {\bf to} $s-2$ {\bf do}

\hspace*{.5cm} $x_k$ := $a(u_k-L_k)\bmod \beta$

\smallskip \hspace*{1cm} {\bf for} $\;i$ := $0$ {\bf to} $s-k-1$ {\bf pardo} 
$L_{k+i}$ := $L_{k+i} + x_k v_i$\ \,{\bf endfor}

\hspace*{1cm} {\bf for} $n$ := $0$ {\bf to} $\lfloor (s-k-1)/2\rfloor$ 
{\bf pardo}

\hspace*{1.5cm} $L_{k+2n+1}$ := $L_{k+2n+1} + L_{k+2n} \; \divi\ \beta$ ; 
$L_{k+2n}$ := $L_{k+2n}\bmod \beta$

\hspace*{1cm} {\bf endfor}

\smallskip {\bf endfor}

\smallskip \noindent \hrulefill{}\hrulefill

\bigskip Theorem~\ref{bound} below shows that, whatever the input size $s$, all the $L_i$'s are bounded. 
Therefore, the {\em ParPPI} algorithm (version~2) is also linear in terms of surface 
(i.e., the maximal number of processors needed) and time complexity.

\begin{thm} \label{bound}
For any $\beta\ge 2$, $s>1$, $k\le s-1$ and $n\le \lfloor (s-k-1)/2\rfloor$, 

($i)$~$L_{k+2n}\le \beta - 1$.

$(ii)$~$L_{k+2n+1}\le \beta^2 + \beta - 2$.
\end{thm}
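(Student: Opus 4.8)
The plan is to prove both bounds simultaneously by induction on the main-loop index $k$, maintaining them as a joint invariant on the array $(L_k, L_{k+1}, \ldots, L_{s-1})$ evaluated just after the carry-propagation loop of iteration $k$. Calling $j = k+2n$ the ``even-offset'' positions and $j = k+2n+1$ the ``odd-offset'' positions, the invariant $I(k)$ asserts that every even-offset $L_j$ is at most $\beta-1$ and every odd-offset $L_j$ is at most $\beta^2+\beta-2$. The central structural observation, and the reason the alternated-carry scheme keeps the $L_i$'s bounded, is that incrementing $k$ by one flips the parity of the offset of each fixed position: a position that was odd-offset (large bound) at the end of iteration $k-1$ becomes even-offset at iteration $k$, and conversely. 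Hence at the \emph{start} of iteration $k$ the even-offset positions carry the large bound $\beta^2+\beta-2$, while the odd-offset positions carry only the small bound $\beta-1$.

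For the base case $I(0)$ I would compute directly: since all $L_i$ start at $0$, the product step sets $L_i = x_0 v_i$, and the carry loop yields $L_{2n} = (x_0 v_{2n})\bmod\beta \le \beta-1$ together with $L_{2n+1} = x_0 v_{2n+1} + \lfloor x_0 v_{2n}/\beta\rfloor \le (\beta-1)^2 + (\beta-2) = \beta^2-\beta-1 \le \beta^2+\beta-2$, using $x_0, v_i \le \beta-1$. Part $(i)$ of the invariant is then immediate at every step, with no induction needed: position $k+2n$ is explicitly reduced modulo $\beta$ in the carry loop, so $L_{k+2n}\le\beta-1$ by construction.

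The work is entirely in the inductive step for $(ii)$. Fix an odd-offset position $q = k+2n+1$. Its value at the end of iteration $k$ equals its start value, plus the product $x_k v_{q-k}$ added in the inner \textbf{pardo} loop, plus the carry $\lfloor L_{q-1}/\beta\rfloor$ received from its even-offset neighbour $q-1 = k+2n$. By the parity flip the start value of $q$ is at most $\beta-1$, so after the product it is at most $(\beta-1)+(\beta-1)^2 = \beta^2-\beta$. The delicate term is the carry: the neighbour $L_{q-1}$ is even-offset, hence had start bound $\beta^2+\beta-2$, to which the product step adds a further $x_k v_{q-1-k}\le(\beta-1)^2$; thus $L_{q-1}\le 2\beta^2-\beta-1$, giving $\lfloor L_{q-1}/\beta\rfloor \le 2\beta-2$. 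Summing, $L_q \le (\beta^2-\beta)+(2\beta-2) = \beta^2+\beta-2$, which closes the induction.

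The main obstacle is precisely this last estimate: one must see that although an even-offset entry may grow as large as roughly $2\beta^2$ before its carry is extracted, division by $\beta$ brings the carry down to about $2\beta$, which is small enough to be absorbed by the odd-offset entry whose own start value the parity flip has forced down to $\beta-1$. Keeping straight the bookkeeping of ``start-of-iteration'' versus ``after-product'' versus ``after-carry'' values across the parity flip is the only real subtlety; the arithmetic is routine once the invariant is correctly stated. I would also record the harmless boundary behaviour at the top of the array, where the topmost carry spills into the extra slot $L_s$ supplied by the initialization, which affects none of the stated bounds.
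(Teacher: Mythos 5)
Your proof is correct and takes essentially the same route as the paper's: the same induction on $k$ driven by the parity-flip observation, the same intermediate bounds $\beta^2-\beta$ and $2\beta^2-\beta-1$ before the carry extraction, and the same closing estimate $(\beta^2-\beta)+(2\beta-2)=\beta^2+\beta-2$. The only cosmetic difference is that the paper anchors the induction at the all-zero initialization rather than at the state after iteration $0$.
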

\begin{proof} By induction on $k$.

\smallskip {\bf Basis:}\ Obviously, $L_{2n}=L_{2n+1}=0$. So, $L_{2n}\le \beta - 1$ 
and $L_{2n+1}\le \beta^2 + \beta - 2$.

\medskip {\bf Induction step:}\ for any $i\ge k$, let $L_i(k)$ denote the value of 
$L_i$ at the end of the $k$-th iteration. Suppose that inequalities $(i)$ and $(ii)$ 
hold for a given positive integer $k$. After the computation of $x_{k+1}$, 
and since $x_{k+1},\; v_j\le \beta - 1$ for any $k$ and $j$, we have
\begin{align} 
L_{k+1+2n+1}(k) + x_{k+1} v_{2n+1} &\;\le \;\beta^2 -\beta,\ 
\qquad \quad \;\mbox{for}\ \; n\le \lfloor (s-k-2)/2\rfloor, \label{eq1} \\
L_{k+1+2n}(k) + x_{k+1} v_{2n} &\;\le \;2\beta^2 -\beta-1,\ 
\quad \mbox{for}\ \; n\le \lfloor (s-k-1)/2\rfloor. \label{eq2}
\end{align}
At the end of the $(k+1)$-st iteration Eq.~\eqref{eq1} and Eq.~\eqref{eq2} yield
\begin{align*}
L_{k+2n+1}(k+1) &\;\le \;(2\beta^2-\beta-1)\bmod \beta\le \beta - 1,\ \qquad \text{and}\\
L_{k+2n+2}(k+1) &\;\le \;\beta^2-\beta+(2\beta^2-\beta-1) \; \divi\ \beta = \beta^2 + \beta - 2.
\end{align*}
\end{proof}
\begin{rem}
The alternated carry propagation technique uses 3 digits for $L_{k+2n+1}$ 
and one digit for $L_{k+2n}$. Thus, only $2s$ digits are needed in the algorithm.
\end{rem}

\section{Applications}
The {\em ParPPI} algorithm can also be used in several applications. 
All these algorithms use the same ``pen and paper'' multiplication 
technique combined with a carry propagation technique (either Version~1 
or Version~2). Since it is new, we rather use the second version in the 
applications.

\subsection{Exact Division Algorithms}
Algorithms for exact division compute the exact quotient u/v of two long 
integers u and v, when it is known in advance that the remainder is zero. 
The exact division is easily completed by using the {\em ParPPI} algorithm 
with the parameter $r=\ell_\beta(u)-\ell_\beta(v)+1$ in place of $s$. 
Then, $r$ is no longer the ``input size'' but rather expresses the 
difference between the sizes of inputs $u$ and $v$~\cite{jeb2}.

\subsection{The dmod Operation}
Let $\ell_\beta(u)=s$ and $\ell_\beta(v)=t$, with $s\ge t$. The dmod operation is defined as
$$\mbox{dmod}_\beta(u,v) \stackrel{def}{=} |xv-u|/\beta^r,\ \mbox{where}\ 
r=s-t+1\ \mbox{and}\ x\equiv (u/v)\!\!\!\pmod{ \beta^{r}}.$$

The algorithm below is similar to the {\em ParPPI} algorithm. It simply multiplies 
$x_k$ and $v$, and simultaneously substracts $u$ from $x_k v$. The subtraction 
is performed by $\beta$-complement. The $\beta$-complement of $u_k$ is defined as follows:
$$u'_k\; \stackrel{def}{=}\;
\left\{ \begin{array}{lll}
	(\beta-u_k) & \mbox{if}\ k=r, \\
	(\beta-1-u_k) & \mbox{if}\ k>r.
\end{array}
\right.$$

We assume that $xv$ and $u$ are two $(s+1)$ digits numbers; so we set $u_s=0$.

\bigskip \noindent \hrulefill\, {\bf \em ParPPI {\em dmod} Algorithm}\, \hrulefill

\smallskip \noindent {\em Input:}\ $u,\; v$ two positive numbers, $\ell_\beta(u)=s$, 
$\ell_\beta(v)=t$, with $s\ge t$, and $\gcd(v,\beta)=1$. \\
{\em Output:}\ $x\equiv (u/v)\pmod {\beta^{r}}$ and dmod$_\beta(u,v) = |xv-u|/\beta^r$, where $r=s-t+1$.

\medskip {\bf for} $i$ := $0$ {\bf to} $s+1$ {\bf pardo} $L_i$ := 0 {\bf endfor} ;\ \hspace{2cm} /*~initialization~*/

$a$ := $v_0^{-1}\bmod \beta$ ; $r$ := $s-t+1$

\smallskip \hspace*{.5cm} {\bf for} $k$ := $0$ {\bf to} $r-1$ {\bf do}

\hspace*{.8cm} $x_k$ := $a(u_k-L_k)\bmod \beta$

\smallskip \hspace*{1cm} {\bf for} $\;i$ := $0$ {\bf to} $t-1$ {\bf pardo} 
$L_{k+i}$ := $L_{k+i} + x_kv_i$\ {\bf endfor}

\hspace*{1cm} {\bf for} $n$ := $0$ {\bf to} $\lfloor (s-k-1)/2\rfloor$ 
{\bf pardo}

\hspace*{1.5cm} $L_{k+2n+1}$ := $L_{k+2n+1} + L_{k+2n}\; \divi\ \beta$ ;

\hspace*{1.5cm} $L_{k+2n}$ := $L_{k+2n}\bmod \beta$

\hspace*{1cm} {\bf endfor}

\smallskip \hspace*{.5cm} {\bf endfor}

\smallskip \hspace*{1cm} {\bf for} $\;i$ := $0$ {\bf to} $s-r$ {\bf pardo} 
$L_{r+i}$ := $L_{r+i} + u'_{r+i}$\ {\bf endfor}

\smallskip \hspace*{.5cm} {\bf for} $k$ := $r$ {\bf to} $s$ {\bf do}

\smallskip \hspace*{1cm} {\bf for} $n$ := $0$ {\bf to} $\lfloor (s-k)/2\rfloor$ 
{\bf pardo}

\hspace*{1.3cm} $L_{k+2n+1}$ := $L_{k+2n+1} + L_{k+2n} \; \divi\ \beta$ ;

\hspace*{1.3cm} $L_{k+2n}$ := $L_{k+2n}\bmod \beta$

\hspace*{1cm} {\bf endfor}

\smallskip \hspace*{.5cm} {\bf endfor}

\smallskip \hspace*{1.3cm} $w$ := $(L_s,\ldots,L_r)$

\smallskip \hspace*{1.3cm} {\bf if} $L_{s+1}=0$ {\bf then} $w$ := $\beta^t - w$

\smallskip \noindent {\bf return}\ $x,\;w$

\smallskip \noindent \hrulefill{}\hrulefill

\begin{rem} If $L_{s+1}=0$, $xv-u = w-\beta^t < 0$; and $xv-u$ is easily given by 
$\beta$-complement. Moreover the algorithm provides the sign of $xv-u$.
Note also that the algorithm computes the modular division and the digit modulus operations simultaneously.
\end{rem}

\subsection{A Linear Surface-Time Multiplication}
The same algorithm also applies to perform a multiplication which is 
also linear in terms of surface and time complexity. However, the 
{\em ParPPI} multiplication algorithm is not even efficient, since 
the best sequential multiplication algorithms are $O(s\log^c s)$, for some constant $c>0$.

\newpage
\noindent \hrulefill\, {\bf \em The ParPPI Multiplication Algorithm}\, \hrulefill

\smallskip \bigskip \noindent {\em Input:}\ $u,\; v$ two positive numbers, $\ell_\beta(u)=s$, 
$\ell_\beta(v)=t$. \\
{\em Output:}\ $L=uv$.

\medskip {\bf for} $i$ := $0$ {\bf to} $s+t$ {\bf pardo} $L_i$ := 0 {\bf endfor}

\smallskip {\bf for} $k$ := $0$ {\bf to} $t-1$ {\bf do}

\smallskip \hspace*{.5cm} {\bf for} $\;i$ := $0$ {\bf to} $s-1$ {\bf pardo} 
$L_{k+i}$ := $L_{k+i} + v_ku_i$\ {\bf endfor}

\hspace*{.5cm} {\bf for} $n$ := $0$ {\bf to} $\lfloor (s-1)/2\rfloor$ 
{\bf pardo} 

\hspace*{1cm} $L_{k+2n+1}$ := $L_{k+2n+1} + L_{k+2n} \; \divi\ \beta$ ;

\hspace*{1cm} $L_{k+2n}$ := $L_{k+2n}\bmod \beta$

\hspace*{.5cm} {\bf endfor}

{\bf endfor}

\smallskip {\bf for} $k$ := $t$ {\bf to} $s+t-1$ {\bf do}

\smallskip \hspace*{.5cm} {\bf for} $n$ := $0$ {\bf to} $\lfloor (s+t-k-1)/2\rfloor$ {\bf pardo}

\hspace*{1cm} $L_{k+2n+1}$ := $L_{k+2n+1} + L_{k+2n} \; \divi\ \beta$ ;

\hspace*{1cm} $L_{k+2n}$ := $L_{k+2n}\bmod \beta$

\hspace*{.5cm} {\bf endfor}

\smallskip {\bf endfor}

\smallskip \noindent {\bf return}\ $L=(L_{s+t-1},\ldots,L_1,L_0)$

\smallskip \noindent \hrulefill{}\hrulefill

\subsection{The $p$-adic Expansion of Rationals}
The ordered sequence of the $s$ digits of $x \equiv (u/v)\pmod {\beta^{s}}$ 
represents exactly the expected Hensel code H$(u,v;\beta^s)$. The Hensel 
code is thus directly given by the {\em ParPPI} algorithm. Note that 
the $p$-adic expansion is obtained step by step, when $s$ tends to infinity.

\subsection{Computation of Periods of Rational Numbers}
Let $u/v$ be a rational number, with $u<v$, and let $v$ and $\beta$ be coprime. 
Let $T$ be the periodic part of the expansion, called the period of $u/v$, 
and let $t$ be the length of $T$ in base $\beta$. Then, $u/v=T/(\beta^t-1)$, 
and thus $vT\equiv (-u)\pmod {\beta^{t}}$.

Now using the {\em PPI} algorithm with the parameters $-u$, $v$ and $t$ yields $T\equiv -u/v\pmod \beta^t$.

\smallskip Note that $T$ is obtained in a ``least-significant digit first'' manner, 
whereas the classical division provides the period in a ``most-significant digit first'' manner.

\section{Conclusion}
The new modular division {\em PPI} algorithms (sequential and parallel 
variants) enjoy various interesting properties.

It is straightforward to derive many {\em LSF} algorithms from the 
{\em ParPPI} algorithm. Such is the case for the exact division, 
for the $p$-adic expansion and the periods of rational numbers, 
and for the multiplication and the dmod operations as well. All above 
applications are founded on the one and same identical scheme. This 
makes it easier to construct interactions between them, and thus provide 
an homogeneous collection of routines which may be extended later.

The {\em ``alternated carry propagation''} is a new carry propagation 
technique, which is also fruitful for add-and-shift algorithms.

\smallskip The parallel algorithms proposed herein follow the same classical
multiplication scheme along with a carry propagation technique (either 
carry save or alternated carry). These algorithms are all linear ($O(s)$) 
in terms of surface $S(s)$ (i.e., the maximal number of processors needed) 
and time complexity $T(s)$, where $s$ is the input size. As a consequence, 
the work (or cost), $W(s)=S(s)\times T(s)$, is $O(s^2)$, and they are neither  
in $\cal NC$, nor efficient (recall that the best sequential algorithms 
solving the same problems are $O(s\log^c(s))$, for some constant $c>0$). 
Note also that their parallel speed-up is $O(\log^c s)$, with efficiency 
$O(\log^c(s)/s)$.

{\em ParModiv} and the main {\em ParPPI} algorithm are equivalent, since 
{\em ParModiv} is also surface-time linear (and hence optimal). However, our 
algorithm may significantly improve when used intensively within several 
efficient GCD algorithms, for example~\cite{jeb3,leh,sor,web}. A new algorithm 
is also presented, which performs the modular division as well as the digit 
modulus of two long positive integers simultaneously. 

\smallskip All the {\em ParPPI} algorithms described are suitable for systolic 
implementation in a ``least-significant digit first'' manner, because 
all decisions in the procedures are taken by using the lower digits of the 
operands. Hence they can be well aggregated to other systolic algorithms 
in the arithmetic of multiprecision rational numbers. {\em LSF} processing 
is also used in the most efficient systolic algorithms for multiprecision 
rational arithmetic. Among them, one may mention long integer 
multiplication~\cite{atr} and addition/substraction~\cite{knu} algorithms, 
the Brent-Kung systolic GCD algorithm~\cite{brk} and the algorithms 
in~\cite{jeb1,jeb2,jeb3}.

\medskip The present work continues and complements our previous 
investigations~\cite{sel} in improving the algorithm for modular division. 
The combined effects of these improvements allow several basic routines in 
Computer Algebra systems to run more efficiently.

\end{document}